\newtheorem{proposition}{Proposition}
\begin{document}

\title[Discussion on \textquotedblleft Model Confidence Bounds for Variable
Selection\textquotedblright\ by Li et al.]{Discussion on \textquotedblleft Model Confidence Bounds for Variable
Selection\textquotedblright\ by Yang Li, Yuetian Luo, Davide Ferrari,
Xiaonan Hu, and Yichen Qin}
\author{Hannes Leeb$^{\text{*,+}}$, Benedikt M. P\"{o}tscher$^{\text{*}}$,
and Danijel Kivaranovic$^{\text{*}}$ \\
$^{\text{*}}$Department of Statistics, University of Vienna\\
$^{\text{+}}$DataScience@UniVienna\\
e-mail: benedikt.poetscher@univie.ac.at}
\date{July 2018}
\maketitle

We congratulate the authors of \citet{Li18a} for their interesting paper,
and we thank the Editor for the opportunity to comment on it. We shall offer
some criticism of the proposed model confidence set in the following. Unless
noted otherwise, we use the same notation as in \citet{Li18a}. [\citet{Li18a}
are not explicit about the nature of the regressor variables. We read their
paper as considering nonstochastic regressors, and we note that similar
results are possible for models with stochastic regressors. Also, a full
column-rank assumption on the regressor matrix $X$ seems to be missing,
without which the symbol $m^{\ast }$ is not well-defined. We hence add this
assumption.]

Let us first strengthen Theorem 1, the main result of \citet{Li18a}, a
little bit. Recall that $\hat{r}(m_{1},m_{2})=B^{-1}\sum_{b=1}^{B}I(m_{1}%
\subseteq \hat{m}^{(b)}\subseteq m_{2})$ and define $\tilde{r}%
(m_{1},m_{2})=P\left( m_{1}\subseteq \hat{m}^{(b)}\subseteq m_{2}\Vert
Y\right) $, where $Y$ are the data. That is, $\tilde{r}(m_{1},m_{2})$ is the
analogue of $\hat{r}(m_{1},m_{2})$, computed from the exact bootstrap
distribution rather than from an approximation to it based on an i.i.d.
bootstrap sample of size $B$ (note that $\tilde{r}(m_{1},m_{2})=E(\hat{r}%
(m_{1},m_{2})\Vert Y)$, and thus is the quantity that $\hat{r}(m_{1},m_{2})$
is trying to approximate with the help of the bootstrap sample). Let $\tilde{%
m}_{L}$ and $\tilde{m}_{U}$ now be obtained from the exact bootstrap, i.e., $%
\tilde{m}_{L}$ and $\tilde{m}_{U}$ are defined by program (2) in %
\citet{Li18a}, but with $\tilde{r}(m_{1},m_{2})$ replacing $\hat{r}%
(m_{1},m_{2})$ everywhere. To exclude trivial cases, we assume that the
nominal confidence level satisfies $0<1-\alpha <1$. [Recall that $\hat{m}%
_{L} $ and $\hat{m}_{U}$ are defined by program (2) using $\hat{r}%
(m_{1},m_{2})$ as given in the paper.] We note that inspection of the proof
of Theorem 1 in \citet{Li18a} reveals that this proof actually seems to be
given for $\tilde{m}_{L}$ and $\tilde{m}_{U}$ rather than for $\hat{m}_{L}$
and $\hat{m}_{U}$ (although the theorem also holds for the latter as will
transpire from the subsequent result). We are now ready for the improved
version of Theorem 1 in \citet{Li18a}.

\begin{proposition}
\label{p1} Assume: (A.1) (Model selection consistency) $P(\hat{m}\neq
m^{\ast })=o(1)$; (A.2) (Bootstrap validity) For the re-sampled model $\hat{m%
}^{(b)}$, assume $P(\hat{m}^{(b)}\neq \hat{m})=o(1)$.

(a) Then $P(\tilde{m}_{L}\subseteq m^{\ast }\subseteq \tilde{m}_{U})=1+o(1)$
and $P(\left\vert \tilde{m}_{U}\right\vert -\left\vert \tilde{m}%
_{L}\right\vert =0)=1+o(1)$; in fact, $P(\tilde{m}_{U}=\tilde{m}_{L}=\hat{m}%
=m^{\ast })=1+o(1)$ holds. Moreover, assuming (A.1) only, this statement
trivially continues to hold if $\tilde{m}_{L}$ and $\tilde{m}_{U}$ are both
replaced by $\hat{m}$.

(b) Part (a) also holds if we replace $\tilde{m}_{L}$ by $\hat{m}_{L}$ and $%
\tilde{m}_{U}$ by $\hat{m}_{U}$ (where the number of bootstrap replications $%
B$ may depend on $n$).
\end{proposition}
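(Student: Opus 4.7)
The core observation is that under (A.1) and (A.2) the conditional bootstrap distribution of $\hat{m}^{(b)}$ given $Y$ is degenerate at $\hat{m}$ up to $o_{P}(1)$, while $\hat{m}$ equals $m^{\ast}$ with probability $1-o(1)$. Consequently program (2) collapses to the trivial optimum $(\hat{m},\hat{m})$, and the only real task is to verify feasibility and uniqueness of this singleton.

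\emph{Part (a).} First, since $E\bigl[1-\tilde{r}(\hat{m},\hat{m})\bigr]=P(\hat{m}^{(b)}\neq \hat{m})=o(1)$ by (A.2), Markov's inequality gives $\tilde{r}(\hat{m},\hat{m})=1-o_{P}(1)$. Using $1-\alpha <1$, the pair $(\hat{m},\hat{m})$ is therefore feasible for program (2) with probability $1-o(1)$ and attains the smallest possible value $|m_{2}\setminus m_{1}|=0$ of the objective. Every minimizer of $|m_{2}|-|m_{1}|$ subject to $m_{1}\subseteq m_{2}$ is a singleton $(m',m')$, so it remains to exclude singletons with $m'\neq \hat{m}$. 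But the events $\{\hat{m}^{(b)}=m'\}$ for distinct $m'$ are disjoint, hence
\[
\max_{m'\neq \hat{m}}\tilde{r}(m',m')\;\leq \;1-\tilde{r}(\hat{m},\hat{m})\;=\;o_{P}(1),
\]
which is eventually strictly less than $1-\alpha$. Thus $\tilde{m}_{L}=\tilde{m}_{U}=\hat{m}$ with probability $1-o(1)$, and (A.1) then supplies $\tilde{m}_{L}=\tilde{m}_{U}=\hat{m}=m^{\ast}$. Under (A.1) alone, the analogous assertion with $\hat{m}$ on both sides is immediate.

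\emph{Part (b).} The same argument applies with $\hat{r}$ in place of $\tilde{r}$. Since $E[\hat{r}(m_{1},m_{2})\Vert Y]=\tilde{r}(m_{1},m_{2})$, the unconditional expectation equals $P(m_{1}\subseteq \hat{m}^{(b)}\subseteq m_{2})$. Thus (A.2) yields $E[1-\hat{r}(\hat{m},\hat{m})]=o(1)$, while for each fixed $m'\neq m^{\ast}$ the decomposition $P(\hat{m}^{(b)}=m')\leq P(\hat{m}^{(b)}\neq \hat{m})+P(\hat{m}\neq m^{\ast})=o(1)$ gives $E[\hat{r}(m',m')]=o(1)$. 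Markov's inequality and a union bound over the finitely many models deliver $\hat{r}(\hat{m},\hat{m})=1-o_{P}(1)$ and $\max_{m'\neq \hat{m}}\hat{r}(m',m')=o_{P}(1)$ on the event $\{\hat{m}=m^{\ast}\}$, irrespective of how $B=B(n)$ is chosen; the remainder of the argument then proceeds exactly as in part (a).

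The main obstacle I foresee is uniqueness of the minimizer rather than feasibility: program (2) could in principle admit several singleton optima, so that the stronger identity $\tilde{m}_{L}=\tilde{m}_{U}=\hat{m}$ (as opposed to the weaker $|\tilde{m}_{U}\setminus \tilde{m}_{L}|=0$) would require a tie-breaking rule. The bound $\tilde{r}(m',m')\leq 1-\tilde{r}(\hat{m},\hat{m})$ resolves this cleanly by leaving $(\hat{m},\hat{m})$ as the only feasible singleton with probability $1-o(1)$.
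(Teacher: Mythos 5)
Your proof is correct. Part (a) follows essentially the same route as the paper: establish $\tilde{r}(\hat{m},\hat{m})\to 1$ in probability from (A.2) via the tower property (the paper phrases this as $L_{1}$-convergence, you as Markov's inequality applied to the nonnegative variable $1-\tilde{r}(\hat{m},\hat{m})$ --- the same thing), then observe that $(\hat{m},\hat{m})$ is eventually the unique feasible singleton for program (2) because $\sum_{m'\neq\hat{m}}\tilde{r}(m',m')\leq 1-\tilde{r}(\hat{m},\hat{m})$, and finish with (A.1).

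Part (b) is where you genuinely diverge, and your route is simpler. The paper splits into three cases: for fixed $B$ it works on the event $\bigcap_{b=1}^{B}\{\hat{m}^{(b)}=\hat{m}\}$, whose probability tends to one by a finite union bound; for $B_{n}\to\infty$ it decomposes $\hat{r}(\hat{m},\hat{m})$ into $\tilde{r}(\hat{m},\hat{m})$ plus a conditionally centered average of Bernoulli variables, controlled by Chebyshev with conditional variance at most $(4B_{n})^{-1}$; and it handles a general sequence $B_{n}$ by a subsequence argument. You instead note that $0\leq 1-\hat{r}(\hat{m},\hat{m})$ and, since the bootstrap draws are identically distributed, $E\bigl[1-\hat{r}(\hat{m},\hat{m})\bigr]=P(\hat{m}^{(b)}\neq\hat{m})=o(1)$ by (A.2), so Markov's inequality gives $\hat{r}(\hat{m},\hat{m})=1-o_{P}(1)$ uniformly in the choice of $B=B(n)$ --- no case distinction, no conditional variance computation, no subsequence argument. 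What the paper's longer argument buys is an explicit separation of the two error sources (Monte Carlo error of order $B_{n}^{-1/2}$ versus the bootstrap bias $1-\tilde{r}(\hat{m},\hat{m})$), which is informative but not needed for the stated conclusion. Two minor blemishes in your write-up, neither fatal: the side remark bounding $P(\hat{m}^{(b)}=m')$ for $m'\neq m^{\ast}$ is redundant (and conflates $m^{\ast}$ with $\hat{m}$; the relevant comparison model is $\hat{m}$, and the clean bound $\max_{m'\neq\hat{m}}\hat{r}(m',m')\leq 1-\hat{r}(\hat{m},\hat{m})$ that you already use in part (a) suffices); and the qualifier \textquotedblleft on the event $\{\hat{m}=m^{\ast}\}$\textquotedblright\ is unnecessary at that point in the argument.
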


\begin{proof}
(a) (A.2) implies that $\tilde{r}(\hat{m},\hat{m})=P\left( \hat{m}^{(b)}=%
\hat{m}\Vert Y\right) $ converges to $1$ in probability as $n\rightarrow
\infty $. [To see this note that (A.2) implies $1-P(\hat{m}^{(b)}=\hat{m}%
)=E(1-P(\hat{m}^{(b)}=\hat{m}\Vert Y))\rightarrow 0$ as $n\rightarrow \infty 
$. But this shows that $E(\left\vert 1-P(\hat{m}^{(b)}=\hat{m}\Vert
Y)\right\vert )$ converges to zero, which even implies $L_{1}$-norm
convergence of $P(\hat{m}^{(b)}=\hat{m}\Vert Y)$ to $1$.] Obviously, this in
turn also implies $\sum\nolimits_{m\neq \hat{m}}\tilde{r}(m,m)\rightarrow 0$
in probability. Inspection of (2) in \citet{Li18a} (with $\tilde{r}%
(m_{1},m_{2})$ replacing $\hat{r}(m_{1},m_{2})$ everywhere) then shows that $%
\tilde{m}_{L}=\tilde{m}_{U}=\hat{m}$ holds on an event that has probability
converging to $1$. In view of (A.1), the event where $\tilde{m}_{U}=\tilde{m}%
_{L}=\hat{m}=m^{\ast }$ holds then also has probability converging to $1$.
This proves everything except for the last claim. For the last statement
just note that the event $\{\hat{m}=m^{\ast }\}$ trivially has probability
converging to one under (A.1).

(b) Consider first the case where $B$ does not depend on $n$: Then, under
(A.2), the event $A_{n}=\{\hat{m}=\hat{m}^{(1)}=\ldots =\hat{m}^{(B)}\}$,
which is the finite intersection $\bigcap_{b=1}^{B}\left\{ \hat{m}^{(b)}=%
\hat{m}\right\} $, has probability converging to one. Note that $\hat{r}(%
\hat{m},\hat{m})=1$ and $\sum\nolimits_{m\neq \hat{m}}\hat{r}(m,m)=0$ hold
on $A_{n}$. Inspection of (2) in \citet{Li18a} hence shows that $\hat{m}_{L}=%
\hat{m}_{U}=\hat{m}$ holds on $A_{n}$, which, together with (A.1), proves
the first claim in Part (b) for fixed $B$. Consider next the case where $%
B=B_{n}$ diverges to infinity as $n\rightarrow \infty $: Similarly as in
Part (a), it suffices to show that $\hat{r}(\hat{m},\hat{m})$ converges to $%
1 $ in probability (noting that $\sum\nolimits_{m\neq \hat{m}}\hat{r}%
(m,m)\rightarrow 0$ in probability then follows). Note that 
\begin{eqnarray*}
\hat{r}(\hat{m},\hat{m}) &=&B_{n}^{-1}\sum_{b=1}^{B_{n}}\left\{ I(\hat{m}%
^{(b)}=\hat{m})-E(I(\hat{m}^{(b)}=\hat{m})\Vert Y)\right\} +E(I(\hat{m}%
^{(b)}=\hat{m})\Vert Y) \\
&=&B_{n}^{-1}\sum_{b=1}^{B_{n}}\left\{ I(\hat{m}^{(b)}=\hat{m})-E(I(\hat{m}%
^{(b)}=\hat{m})\Vert Y)\right\} +P(\hat{m}^{(b)}=\hat{m}\Vert Y).
\end{eqnarray*}%
The final term on the right-hand side does not depend on $b$ and coincides
with $\tilde{r}(\hat{m},\hat{m})$. It converges to $1$ in probability as
shown in the proof of Part (a). Now, conditional on $Y$, the first term on
the right-hand side is the average of independent random variables that are
centered versions of Bernoulli-distributed variables. Hence, conditional on $%
Y$, this term has zero expectation and variance bounded by $\left(
4B_{n}\right) ^{-1}$, which goes to zero as $n\rightarrow \infty $. An
application of Chebyshev's inequality shows that this term hence converges to
zero in probability. This completes the proof of the first claim in case $%
B_{n}$ diverges to infinity. The case of a general sequence $B_{n}$ now
follows by a standard subsequence argument. The proof is thus complete as
the last claim reduces to the last claim in Part (a).
\ \rule{0.5em}{0.5em}
\end{proof}

Proposition~\ref{p1} obviously strengthens and extends Theorem 1 of %
\citet{Li18a}. However, this result also raises some obvious concerns: In
the limit, the coverage probability of the model confidence set (MCS)
defined by $\tilde{m}_{L}$ and $\tilde{m}_{U}$ ($\hat{m}_{L}$ and $\hat{m}%
_{U}$, respectively) is guaranteed to equal one (and not only to be $\geq
1-\alpha $), and the `length' of this MCS equals zero. In particular,
asymptotically this MCS coincides with the set $\left\{ \hat{m}\right\} $
and thus provides no more information than the point-estimate $\hat{m}$,
which the MCS was intended to improve. Interestingly, the fact that the MCS
asymptotically reduces to $\left\{ \hat{m}\right\} $ under the assumptions
of Proposition 1 seems to have informally been noticed by \citet{Li18a}, at
least in the context of model selection by the adaptive Lasso; see the
discussion in Section 1.5 of the supplementary material of that paper.
Apparently, however, this did not raise any red flags. [We also note that --
in light of the proof of Part (a) given above -- the proof of Theorem 1 in %
\citet{Li18a} seems to be much too complicated; in fact, it seems to be
incorrect: For example, the final inequality in the first display in the
proof of Theorem 1 appears to confuse the complement of $\left\{ \hat{m}%
_{L}\subseteq \hat{m}\right\} $ with $\left\{ \hat{m}_{L}\supsetneqq \hat{m}%
\right\} $.]

It turns out that a \emph{fixed-parameter} asymptotic analysis, as used in
Proposition~\ref{p1} and also in Theorem 1 of \citet{Li18a}, is not
appropriate here. In fixed-parameter asymptotics, the true parameter $\theta 
$ is kept fixed while sample size $n$ increases to infinity. At this point,
it seems fitting to repeat a warning issued a while ago by 
\citet[p.
153]{Haj71a}:

\emph{``Especially misinformative can be those limit results that are not
uniform. Then the limit may exhibit some features that are not even
approximately true for any finite $n$ \dots'' }

In fact, in the presence of model selection, fixed-parameter asymptotic
results not only \emph{can}, but actually often \emph{will}, mislead as has
been amply documented in %
\citet{Lee03a,Lee02a,Lee02c,Lee02b,Lee06a,Lee04a,Poe09a,Poe09b,Poe09c,Poe11a}%
. This is also the case here: We shall provide a \emph{uniform} asymptotic
analysis which reveals the misleading character of the pointwise asymptotic
result above and in Theorem 1 of \citet{Li18a}. To this end we have to amend
the notation a bit: For given sample size $n$, true regression parameter $%
\theta \in {\mathbb{R}}^{p}$, and variance parameter $\sigma ^{2}\in
(0,\infty )$ we shall write $P_{n,\theta ,\sigma ^{2}}$ for $P$ to emphasize
its dependence on the quantities indicated; similarly we shall write $%
m^{\ast }(\theta )$ to denote the smallest correct model for the parameter $%
\theta $. [That is, $P_{n,\theta ,\sigma ^{2}}$ and $m^{\ast }(\theta )$
replace the symbols $P$ and $m^{\ast }$ used in \citet{Li18a}.] Conditions
(A.1) and (A.2) then become $P_{n,\theta ,\sigma ^{2}}(\hat{m}\neq m^{\ast
}(\theta ))=o(1)$ and $P_{n,\theta ,\sigma ^{2}}(\hat{m}^{(b)}\neq \hat{m}%
)=o(1)$, respectively. Compared to Proposition~\ref{p1} (and Theorem 1 of %
\citet{Li18a}) we also add a mild condition on the design matrix $X$.

\begin{proposition}
\label{p2} Assume (A.1) and (A.2) hold (for every $\theta \in {\mathbb{R}}%
^{p}$ and every $\sigma ^{2}\in (0,\infty )$), and assume that the sequence
of matrices $X^{\prime }X/n$ is bounded.

(a) Then for every $\sigma ^{2}\in (0,\infty )$ we have%
\begin{equation}
\inf_{\theta \in {\mathbb{R}}^{p}}P_{n,\theta ,\sigma ^{2}}(\tilde{m}%
_{L}\subseteq m^{\ast }(\theta )\subseteq \tilde{m}_{U})\quad =\quad o(1)
\label{inf_coverage}
\end{equation}%
as $n\rightarrow \infty $. In other words, the minimal coverage probability
of the MCS defined by $\tilde{m}_{L}$ and $\tilde{m}_{U}$ converges to zero.
Moreover, assuming (A.1) only, this statement continues to hold if $\tilde{m}%
_{L}$ and $\tilde{m}_{U}$ are both replaced by $\hat{m}$.

(b) Part (a) also holds if we replace $\tilde{m}_{L}$ by $\hat{m}_{L}$ and $%
\tilde{m}_{U}$ by $\hat{m}_{U}$ (where the number of bootstrap replications $%
B$ may depend on $n$).
\end{proposition}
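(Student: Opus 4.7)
The plan is to exhibit, for each $n$, a single parameter $\theta_n \in \mathbb{R}^p$ along which the coverage probability tends to zero; this immediately forces the infimum in (\ref{inf_coverage}) to vanish. Given the pointwise consistency supplied by (A.1), the only way to break coverage is to exploit the discontinuity of the map $\theta \mapsto m^*(\theta)$ at the boundary where a coordinate of $\theta$ crosses zero. The natural choice is $\theta_n = (c/\sqrt{n})e_j$ for some fixed $c \neq 0$ and coordinate $j$, so that $m^*(\theta_n) = \{j\}$ for every $n$ while $m^*(0) = \emptyset$.

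The single crucial technical ingredient is contiguity of $P_{n,\theta_n,\sigma^2}$ with respect to $P_{n,0,\sigma^2}$. Boundedness of $X'X/n$ together with the error structure assumed in \citet{Li18a} implies that the log-likelihood ratio $\log dP_{n,\theta_n,\sigma^2}/dP_{n,0,\sigma^2}$ is of order $O_p(1)$ under $P_{n,0,\sigma^2}$, so Le Cam's first lemma delivers mutual contiguity. This is the step I expect to be the main obstacle; everything else simply amounts to pushing conclusions across these measures.

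With contiguity in hand the rest is short. By (A.1) at $\theta = 0$, $P_{n,0,\sigma^2}(\hat{m} = \emptyset) \to 1$, and contiguity transfers this to $P_{n,\theta_n,\sigma^2}$; since $m^*(\theta_n) = \{j\} \neq \emptyset$ for every $n$, this already establishes the supplementary claim in part (a) in which $\tilde{m}_L$ and $\tilde{m}_U$ are both replaced by $\hat{m}$. To handle $\tilde{m}_L$ and $\tilde{m}_U$ themselves, I replay the computation from the proof of Proposition~\ref{p1}(a) under $P_{n,0,\sigma^2}$: (A.2) at $\theta = 0$ yields $\tilde{r}(\hat{m},\hat{m}) \to 1$ and $\sum_{m \neq \hat{m}} \tilde{r}(m,m) \to 0$ in $P_{n,0,\sigma^2}$-probability, both of which transfer to $P_{n,\theta_n,\sigma^2}$ by contiguity. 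On the intersection of these events with $\{\hat{m} = \emptyset\}$, program (2) forces $\tilde{m}_L = \tilde{m}_U = \emptyset$; since $\{j\} \not\subseteq \emptyset$, the coverage event fails on this asymptotically full-probability set.

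Part (b) runs on the same template with $\hat{r}$ in place of $\tilde{r}$. For fixed $B$, use the event $\bigcap_{b=1}^B \{\hat{m}^{(b)} = \hat{m}\}$ exactly as in Proposition~\ref{p1}(b); for $B_n \to \infty$, decompose $\hat{r}(\hat{m},\hat{m})$ into its conditional mean $\tilde{r}(\hat{m},\hat{m})$ plus a Chebyshev-controlled conditional average, and handle a general sequence $B_n$ by the subsequence argument already used there. Each of these statements is first produced under $P_{n,0,\sigma^2}$ and then transferred to $P_{n,\theta_n,\sigma^2}$ by the same contiguity, so the minimal coverage again collapses to zero.
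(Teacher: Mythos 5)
Your argument is correct and takes essentially the same route as the paper's proof: pick a base parameter with a zero coordinate, perturb it by $O(1/\sqrt{n})$ so that $m^{\ast}$ changes, establish mutual contiguity of the two sequences of measures (the paper obtains this from Lemma A.1 and Lemma 3.6 of \citet{Lee02c}, the latter lifting contiguity from the Gaussian data distributions to the joint distributions that include the bootstrap randomization --- the one point you leave implicit), and transfer the degenerate-MCS conclusion of Proposition~\ref{p1} along the perturbed sequence. The only cosmetic differences are that you specialize the base point to $\theta^{(0)}=0$ and transfer the ingredients of Proposition~\ref{p1}'s proof individually rather than transferring the single event $\{\tilde{m}_{L}=\tilde{m}_{U}=m^{\ast}(\theta^{(0)})\}$ directly.
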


\begin{proof}
(a) Without loss of generality we may assume that the probability space
underlying $P_{n,\theta ,\sigma ^{2}}$ is given by $\mathbb{R}^{n}\times 
\mathcal{M}_{all}$, where $\mathbb{R}^{n}$ acts as the sample space for the
data $Y$ and $\mathcal{M}_{all}$ is the set of all models (and the $\sigma $%
-field is the product of the Borel-$\sigma $-field on $\mathbb{R}^{n}$ with
the power-set of $\mathcal{M}_{all}$). The probability measure $P_{n,\theta
,\sigma ^{2}}$ can then be viewed as given by 
\begin{equation}
P_{n,\theta ,\sigma ^{2}}(A)=\sum\nolimits_{m\in \mathcal{M}%
_{all}}\int\nolimits_{\mathbb{R}^{n}}I_{A}(y,m)K_{n}(y,m)dQ_{n,\theta
,\sigma ^{2}}(y)  \label{rep}
\end{equation}%
where $Q_{n,\theta ,\sigma ^{2}}$ is the probability measure on $\mathbb{R}%
^{n}$ defined by the $N(X\theta ,\sigma ^{2}I_{n})$-distribution, $%
K_{n}(y,m) $ is the bootstrap distribution on $\mathcal{M}_{all}$ (i.e.,
corresponds to a Markov-kernel from $\mathbb{R}^{n}$ to $\mathcal{M}_{all}$%
), and $I_{A}$ is the indicator function of $A$. Now, choose an arbitrary
constant $\gamma >0$ and a vector $\theta ^{(0)}\in {\mathbb{R}}^{p}$ such
that at least one coordinate of $\theta ^{(0)}$ equals zero. By Proposition~%
\ref{p1}, we have $P_{n,\theta ^{(0)},\sigma ^{2}}(\tilde{m}_{L}=\tilde{m}%
_{U}=m^{\ast }(\theta ^{(0)}))=1+o(1)$. For each $n$, choose $\theta
^{(n)}\in {\mathbb{R}}^{p}$ so that $m^{\ast }(\theta ^{(n)})\neq m^{\ast
}(\theta ^{(0)})$ and so that $\Vert \theta ^{(n)}-\theta ^{(0)}\Vert \leq
\gamma /\sqrt{n}$. (E.g., obtain $\theta ^{(n)}$ from $\theta ^{(0)}$ by
adding $\gamma /\sqrt{n}$ to one of the coordinates of $\theta ^{(0)}$ that
are equal to zero.) Because the sequences of measures $P_{n,\theta
^{(0)},\sigma ^{2}}$ and $P_{n,\theta ^{(n)},\sigma ^{2}}$ are mutually
contiguous (see below), it follows that $P_{n,\theta ^{(n)},\sigma ^{2}}(%
\tilde{m}_{L}=\tilde{m}_{U}=m^{\ast }(\theta ^{(0)}))=1+o(1)$. By
construction, we have $m^{\ast }(\theta ^{(0)})\neq m^{\ast }(\theta ^{(n)})$%
, and hence $P_{n,\theta ^{(n)},\sigma ^{2}}(\tilde{m}_{L}\subseteq m^{\ast
}(\theta ^{(n)})\subseteq \tilde{m}_{U})=o(1)$. [Mutual contiguity of $%
P_{n,\theta ^{(0)},\sigma ^{2}}$ and $P_{n,\theta ^{(n)},\sigma ^{2}}$ is
seen as follows: $Q_{n,\theta ^{(0)},\sigma ^{2}}$ and $Q_{n,\theta
^{(n)},\sigma ^{2}}$ are well-known to be mutually contiguous, see, e.g.,
Lemma A.1 in \citet{Lee02c} and note that this lemma, while given for the
case $\sigma ^{2}=1$, obviously extends to any $\sigma ^{2}>0$. The claim
then follows from Lemma 3.6 in \citet{Lee02c}.]

(b) Completely analogous, where now $\mathcal{M}_{all}$ is replaced by the $%
B $-fold Cartesian product $(\mathcal{M}_{all})^{B}$, $I_{A}(y,m)$ and $%
K_{n}(y,m)$ are replaced by $I_{A}(y,m_{1},\ldots ,m_{B})$ and $%
K_{n}(y,m_{1},\ldots ,m_{B})$, respectively, and where the latter represents
the bootstrap distribution of a bootstrap sample of size $B$.
\ \rule{0.5em}{0.5em}
\end{proof}

Proposition~\ref{p2} shows that the proposed MCS does not perform as desired
in that the worst-case coverage probability tends to zero, and not to the
nominal coverage probability $1-\alpha $. It also shows that the
fixed-parameter asymptotic setting used in Theorem 1 of \citet{Li18a} and in
Proposition~\ref{p1} paints a misleadingly optimistic picture of the actual
performance of the MCS procedure. This is in line with similar findings in
the context of inference post-model-selection reported earlier in %
\citet{Lee03a,Lee02a,Lee02c,Lee02b,Lee06a,Lee04a,Poe09a,Poe09b,Poe09c,Poe11a}%
. In fact, the final claim in Part (a) of Proposition~\ref{p2} can be easily
read-off from results concerning model selection probabilities already given
in these references (for a start see, e.g., Section 2.1 of \citet{Lee03a}).

The proof of Proposition~\ref{p2} reveals that this result continues to hold
if the infimum over $\theta \in {\mathbb{R}}^{p}$ in (\ref{inf_coverage}) is
further restricted, even in a sample-size dependent (e.g., shrinking) way,
as long as the infimum at sample size $n$ is taken over a set containing a
point $\vartheta ^{(n)}$ so that the sequence $\vartheta ^{(n)}$ has the
same properties as the sequence $\theta ^{(n)}$ that is used in the proof.
Also, under assumptions (A.1) and (A.2), a result similar to Proposition~\ref%
{p2} continues to hold in sufficiently smooth parametric models (as long as
the contiguity argument used in the proof goes through). This, of course,
also covers regression models with stochastic regressors.

Proposition 2 is an asymptotic result. To illustrate its import for
finite-sample situations we performed a small simulation study extending
some of the results presented in \citet{Li18a}. Data was generated as
described in Section 5 of \citet{Li18a}. In particular, we consider scenario
(c) of Section 5.1 of that paper with the only difference that we consider
not only one but several values for the parameter vector $\theta $. More
precisely, we set $B=1000$, $n=300$, $p=15$, $p^{\ast }=6$, $\sigma ^{2}=1$, 
$\rho =0.5$, $\theta _{1}=\dots ,\theta _{p^{\ast }-1}=1$, and vary $\theta
_{p^{\ast }}$ through the values $\theta _{p^{\ast
}}=0.05,0.1,0.2,0.3,0.4,0.5,1,2$. Note that $m^{\ast }=\{1,\dots ,p^{\ast
}\} $. The nominal confidence level was set to 90\%, i.e., $\alpha =0.1$.
The SCAD, Lasso, minimum BIC, and minimum AIC were used as model selectors.
We used the residual bootstrap for SCAD, BIC, and AIC, and the modified
residual bootstrap for Lasso. Descriptions of the bootstrap algorithms are
provided in Section 1.3 of the supplementary material of \citet{Li18a}.

For each of the eight different values of $\theta $ listed above and for
each of the four model selectors, we generated 200 response vectors $Y^{(i)}$%
, obtained $(\hat{m}_{L}^{(i)},\hat{m}_{U}^{(i)})$, and computed an estimate
for the coverage probability via 
\begin{equation*}
CP_{\theta }=(1/200)\sum_{i=1}^{200}I(\hat{m}_{L}^{(i)}\subseteq m^{\ast
}\subseteq \hat{m}_{U}^{(i)}).
\end{equation*}%
Let $CP_{\ast }$ denote the minimum of the $CP_{\theta }$-values when $%
\theta $ varies through the eight values mentioned above. This results in an
estimate of a (loose) upper bound for the worst-case coverage probability
for each of the four model selection procedures considered. [Of course, we
could have searched more thoroughly over the parameter space for $\theta $
to get a smaller upper bound for the worst-case coverage probability, but
this would require a much higher investment in computational resources
(including a second-stage simulation correcting downward bias resulting from
searching for the minimum). Searching only over eight values of $\theta $,
as we do here, incurs only a, for our purpose, negligible bias. The rough
upper bound $CP_{\ast }$ we obtain is good enough to make our point.] We
summarize the values of $CP_{\ast }$ for the four procedures in the
subsequent table:

\begin{center}
\begin{tabular}[t]{|l|l|l|l|l|}
\Hline
& $BIC$ & $SCAD$ & $AIC$ & $LASSO$ \\ \hline
$CP_{\ast }$ & $0.45$ & $0.16$ & $0.92$ & $0.94$ \\ \hline
\end{tabular}
\end{center}

\vspace{0.5cm}
The result for $CP_{\ast }$, when the consistent model selection procedure
minimum BIC is used, is in good agreement with Proposition 2, as $CP_{\ast }$
is much smaller than the nominal confidence level $0.9$. [In additional
simulations for model selection by minimum BIC using other values of $\theta 
$ we even have found values of $CP_{\theta }$ as small as 0.06.] Also, for
SCAD the observed $CP_{\ast }$ is much smaller than $0.9$. [Following %
\citet{Li18a} we have chosen the tuning parameter via crossvalidation. %
\citet{Li18a} do not show that this results in a consistent model selection
procedure, and probably it does not. However, we have not investigated this
issue. If the so-tuned SCAD is not a consistent model selection procedure,
then Proposition 2 strictly speaking does not apply. Nevertheless, the
simulations show that the worst-case coverage probability can be way below
the nominal one.] For the minimum AIC procedure, as well as for Lasso, the
observed $CP_{\ast }$ is about right. Since minimum AIC is not a consistent,
but rather a conservative, model selection procedure, Proposition 2 does not
apply; probably the same is true for Lasso given the tuning used. [Again the
tuning parameter for Lasso has been chosen via crossvalidation. \citet{Li18a}
do not show that this results in a consistent model selection procedure, and
probably it does not. We have not investigated this issue any further.] Of
course, this is not to say that the worst-case coverage probability for both
of these procedures is in any way guaranteed to be close to the nominal one;
it could also be that a more exhaustive search over the parameter space
would have turned up a much lower worst-case coverage probability also for
these procedures.

The above discussion shows that the proposed MCS has substantial defects
when used with consistent model selection procedures. It leaves open the
possibility that the MCS procedure suggested by \citet{Li18a} may have merit
for some conservative model selection procedures; however, as no theoretical
support for this is presented in \citet{Li18a}, this still needs to be
investigated. Another possible route for research is to consider alternative
asymptotic scenarios where, e.g., the number of parameters increases with
sample size. Problems of prediction following model selection have been
successfully tackled in such a framework, see \citet{Lee05a,Lee09a,Ste18a}.

\bibliographystyle{biom}

\end{document}